\newfont{\Bb}{msbm10}
\newtheorem{theorem}{Theorem}
\newtheorem{corollary}{\bf Corollary}
\begin{document}

\title{Performance Analysis of the Gradient Comparator LMS Algorithm}

\author{Bijit K. Das, Mrityunjoy Chakraborty}

%
%

\thanks{B. K. Das and M. Chakraborty are with the Department of Electronics and Electrical Communication
Engineering, Indian Institute of Technology, Kharagpur, INDIA
(e.mail : bijitbijit@gmail.com;
 mrityun@ece.iitkgp.ernet.in).}

\abstract{The sparsity-aware zero attractor least mean square (ZA-LMS) algorithm manifests much lower misadjustment in strongly sparse environment
than its sparsity-agnostic counterpart, the least mean square (LMS), but is shown to perform worse than the LMS when sparsity
of the impulse response decreases. The reweighted variant of the ZA-LMS, namely RZA-LMS shows robustness against this variation in sparsity,
but at the price of increased computational complexity. The other variants such as the $l_0$-LMS and the improved proportionate normalized LMS
(IPNLMS), though perform satisfactorily, are also computationally intensive. The gradient comparator LMS (GC-LMS) is a practical solution
of this trade-off when hardware constraint is to be considered. In this paper, we analyse the mean and the mean square convergence performance of
the GC-LMS algorithm in detail. The analyses satisfactorily match with the simulation results.}

\maketitle

\section{Introduction}
Time-varying sparseness is a well-encountered phenomenon in many real-life systems. One of the major examples is echo cancellation in hybrid telephone networks. 
These networks comprising mixed packet-swiched and circuit-switched components require the identification and compensation of echo systems with various level of sparseness. The network echo response in such systems is typically of length
64-128 ms, characterized by a bulk delay dependant on network loading, encoding and jitter buffer delays \cite{1}. This results in an ``active`` region in the range of 8-12 ms duration and consequently, the impulse response is dominated by "inactive'' regions where coefficient magnitudes are close to zero, making the impulse response sparse. The echo canceller must
be ``robust'' to this sparseness \cite{2}. 
  
Acoustic echo, common in hands-free mobile telephony, seriously
degrades user experience due to the coupling between the loudspeaker and microphone. So, effective acoustic
echo cancellation (AEC) \cite{16}. is important to improve the perceived voice quality of a call. The sparsity of
these acoustic impulse responses (AIR) varies with the loudspeaker-microphone distance.
Hence, algorithms developed for mobile hands-free terminals
are required to be robust to the variations in the sparsity of
the acoustic impulse response.
 
      
     Adaptive filters are popular tools for estimating the unknown system parameters. The Least Mean Square (LMS) algorithm, introduced by Widrow
and Hoff \cite{3}, and its variants are popular methods for adaptive system identification. However, standard LMS filters do not exploit sparsity. In the past years, many algorithms exploiting sparsity
were based on 
assigning proportional step sizes of different taps according to their
magnitudes, such as the Proportionate Normalized LMS (PNLMS) \cite{2} and its variants \cite{9}. Motivated by LASSO \cite{10} and recent progress in compressive
sensing \cite{11}-\cite{12},\cite{13} proposed an alternative approach to identifying sparse systems using LMS filters. The basic idea is to introduce
a $l_1$ norm (of the coefficients) penalty which favors sparsity in the cost function. This results in a modified LMS update with a zero attractor for all the taps, naming the Zero-Attracting LMS (ZA-LMS). A variant of the ZA-LMS, namely reweighted ZA-LMS (RZA-LMS) \cite{13},
 shows robustness in case of identifying time-varying sparse impulse response. It employs reweighted step sizes of the zero attractor for the different taps. But, it is associated with huge computational burden due to the $L$ division operations at each step, where $L$ is the number of taps in the filter.

      An alternative method to deal with variable sparseness has been proposed in \cite{14}. That work has been extended in \cite{14b}.
      In this paper, we perform the detailed analysis of the algorithm proposed in \cite{14}.

\section{Review of the ZA-LMS and Corresponding Results \cite{13}}
Taking inspirations from the Least Absolute Shrinkage and Selection Operator (LASSO) \cite{10} and the recent research on Compressive
Sensing (CS) \cite{11}-\cite{12}, a new genre of LMS and RLS variants with $l_0$ or $l_1$
norm constraint is proposed in order to accelerate the sparse system identification. Specifically, by exerting the constraint to the standard
LMS cost function, the solution will be sparse and the gradient
descent recursion will accelerate the convergence of near-zero
coefficients in the sparse system. The next section concentrates on zero-attracting(ZA)-LMS algorithm \cite{13}, the simplest of all compressive sensing based sparsity-aware adaptive algorithms.
In ZA-LMS, a new cost function $L_1(n)$ is defined by combining the instantaneous square error with the $l_1$ norm penalty of the
coefficient vector
\begin{equation}
 L_1(n)=\frac{1}{2}e^2(n)+{\parallel {\bf w}(n) \parallel}_1
\end{equation}
where $e(n)$ is the estimation error of the filter output at the $n^{th}$ time instant, and ${\bf w}(n)$ is the weight vector of the adaptive filter
at the $n^{th}$ instant of time.

$$ e(n) = d(n) - y(n), $$
where
$$ d(n) = {\bf w}_{0}^{T}{\bf x}(n) + e_{0}(n), $$
$$ y(n) = {\bf w}^{T}(n){\bf x}(n), $$
${\bf w}_0$ is the optimum system to be identified,  ${\bf x}(n)$ is the input random vector at the $n^{th}$ instant with autocorrelation matrix ${\bf R},$ 
and $e_{0}(n)$ is the observation noise at the $n^{th}$ instant.

The ZA-LMS filter update equation is 
\begin{equation}
 {\bf w}(n+1)={\bf w}(n)+\mu e(n){\bf x}(n)-\rho sgn\{{\bf w}(n)\}
 \label{eq:zalms}
\end{equation}
where $sgn[\cdotp]$ is a component-wise sign function defined as 
  $$sgn[w_i(n)]=\frac {w_i(n)}{|w_i(n)|}$$ if $w_i(n)]\neq0$; and $0$ otherwise. ($w_i(n)$ is the $i^{th}$ element of the vector ${\bf w}(n)$).

Comparing the ZA-LMS update \eqref{eq:zalms} to the standard LMS update \cite{17}, the ZA-LMS has an additional term $-\rho sgn[{\bf w}(n)]$ which
always attracts the tap coefficients to zero. This is the zero-attractor, whose strength is controlled by $\rho$. Intuitively, the zero-attractor will speed-up convergence when the majority of coefficients
of ${\bf w}_0$ are zero, i.e., the system is sparse. The convergence condition
of the ZA-LMS is provided in the following subsections.\\\\

\subsection{Convergence in mean}
\begin{theorem} [from \cite{13}]
 
In \cite{13}, it has been shown that the mean coefficient vector $E[{\bf w}(n)]$ converges as $n\rightarrow \infty$ if $\mu$ satisfies the condition 
(5) in \cite{13}, and the converged vector is
\begin{equation}
 E[{\bf w}(\infty)]={\bf w}_0-\frac{\rho}{\mu}{\bf R}^{-1}E[sgn\{{\bf w}(\infty)\}]
\end{equation}

\end{theorem}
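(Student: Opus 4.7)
The plan is to take expectations on both sides of the ZA-LMS recursion \eqref{eq:zalms} and use the standard independence assumption to collapse the stochastic update into a deterministic recursion for $E[{\bf w}(n)]$, then solve for its fixed point. First I would substitute the error expression $e(n) = ({\bf w}_0 - {\bf w}(n))^T {\bf x}(n) + e_0(n)$ into the update so that $e(n){\bf x}(n) = {\bf x}(n){\bf x}^T(n)({\bf w}_0 - {\bf w}(n)) + {\bf x}(n) e_0(n)$. Invoking the independence assumption that ${\bf x}(n)$ is statistically independent of ${\bf w}(n)$, together with $E[{\bf x}(n)e_0(n)] = {\bf 0}$ (zero-mean observation noise uncorrelated with the input), I would obtain
\begin{equation*}
E[{\bf w}(n+1)] = ({\bf I} - \mu {\bf R})\, E[{\bf w}(n)] + \mu {\bf R}\, {\bf w}_0 - \rho\, E[\text{sgn}\{{\bf w}(n)\}].
\end{equation*}

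Next I would observe that the driving term $-\rho\, E[\text{sgn}\{{\bf w}(n)\}]$ is uniformly bounded (entrywise in $[-\rho, \rho]$), so the behaviour of the recursion is governed by the spectral radius of ${\bf I} - \mu {\bf R}$. Assuming the step-size condition from \cite{13} (i.e.\ $0 < \mu < 2/\lambda_{\max}({\bf R})$), the homogeneous part is a strict contraction and the sequence $E[{\bf w}(n)]$ converges. Passing to the limit $n \to \infty$ and writing ${\bf w}^{\star} := E[{\bf w}(\infty)]$, the recursion becomes the fixed-point equation
\begin{equation*}
{\bf w}^{\star} = ({\bf I} - \mu {\bf R})\, {\bf w}^{\star} + \mu {\bf R}\, {\bf w}_0 - \rho\, E[\text{sgn}\{{\bf w}(\infty)\}].
\end{equation*}
Cancelling ${\bf w}^{\star}$, multiplying through by $(\mu {\bf R})^{-1}$, and rearranging immediately yields the claimed expression
\begin{equation*}
E[{\bf w}(\infty)] = {\bf w}_0 - \frac{\rho}{\mu} {\bf R}^{-1}\, E[\text{sgn}\{{\bf w}(\infty)\}].
\end{equation*}

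The main obstacle, and the only non-routine step, is justifying that the nonlinear feedback $E[\text{sgn}\{{\bf w}(n)\}]$ also converges so that the fixed-point equation is truly attained and not merely formally satisfied. I would handle this by noting that $\text{sgn}(\cdot)$ is bounded, so the inhomogeneous forcing remains in a compact set; combined with the contraction $\|{\bf I} - \mu {\bf R}\| < 1$, standard arguments for perturbed linear recursions (or invoking the corresponding result already established in \cite{13}) guarantee convergence of $E[{\bf w}(n)]$, and hence the consistency of the limiting equation. Everything else is algebraic manipulation that follows directly from the independence assumption and the invertibility of ${\bf R}$.
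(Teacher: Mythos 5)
Your proposal is correct and follows essentially the same route the paper itself uses for the analogous GC-LMS result (Theorem 3): subtract ${\bf w}_0$, take expectations under the independence assumption, observe that the zero-attractor term is bounded entrywise by $\rho$, and conclude convergence from the spectral radius of ${\bf I}-\mu{\bf R}$ before reading off the fixed point. The only minor discrepancy is the stated step-size range ($2/\lambda_{\max}$ versus the paper's more conservative $1/\lambda_{\max}$ in Theorem 3), which does not affect the argument.
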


%
It can be seen that the convergence condition of the ZA-LMS and the standard LMS is same.\\\\

\subsection{Steady-State Excess Mean Square Error [EMSE] }

\begin{theorem}  [from \cite{13}]
It has also been shown in \cite{13} that if NZ denotes the index set of non-zero taps i.e., $w_{0,i}=0$ for $i\in NZ$, and assuming $\rho$ is sufficiently small so that for every $i\in NZ$
\begin{equation}
 E[sgn [w_{i}(\infty)]]=sgn[w_{0,i}],
\end{equation}
 
the excess MSE of the ZA-LMS filter is 
\begin{equation}
 P_{ex}(\infty)=\frac{\eta}{2-\eta}\sigma_v^2+\frac{\alpha_1\rho}{2-\eta}\left(\rho-\frac{2\alpha_2}{\alpha_1}\right),
 \label{eq:ss_emse}
\end{equation}

where
\begin{equation}
 \sigma_v^2=E[e_{0}^{2}(n)] \nonumber
\end{equation}
and,
\begin{equation}
\alpha_1 =  E[sgn({\bf w}(\infty))^T({\bf I} - \mu {\bf R})^{-1}sgn({\bf w}(\infty))]
\end{equation}
with $\bf{I}$ denoting the identity matrix, $\bf{R}$ denoting
the autocovariance matrix of the input, $\sigma_v^2$ indicating the
minimum mean square error, and $\eta\;=\;
Tr(\mu\textbf{R}(\textbf{I}-\mu\textbf{R})^{-1})$, and,
\begin{equation}
\alpha_2 = E[\parallel {\bf w}(\infty)\parallel _1] - \parallel {\bf w}_{0}\parallel _1.
\end{equation}

\end{theorem}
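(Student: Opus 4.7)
The plan is to follow the standard weight-error-covariance route used for LMS-family algorithms, adapted to account for the zero-attracting sign term. First, I would introduce the weight-error vector $\tilde{\bf w}(n) = {\bf w}(n) - {\bf w}_{0}$ and rewrite the ZA-LMS recursion \eqref{eq:zalms} as
\begin{equation}
\tilde{\bf w}(n+1) = \bigl({\bf I} - \mu\,{\bf x}(n){\bf x}^T(n)\bigr)\,\tilde{\bf w}(n) + \mu\,e_{0}(n)\,{\bf x}(n) - \rho\,\text{sgn}\{{\bf w}(n)\}.
\end{equation}
Under the usual independence assumption between ${\bf x}(n)$, $e_{0}(n)$, and $\tilde{\bf w}(n)$, and with $e_{0}(n)$ zero-mean of variance $\sigma_v^2$, I would form the outer product $\tilde{\bf w}(n+1)\tilde{\bf w}^T(n+1)$, take expectations to obtain a recursion for ${\bf K}(n) = E[\tilde{\bf w}(n)\tilde{\bf w}^T(n)]$, and then pre-multiply by ${\bf R}$ and trace to get a scalar recursion for the excess MSE, $P_{ex}(n) = \text{tr}({\bf R}\,{\bf K}(n))$.

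Next, I would group the terms in that scalar recursion by their origin. The purely LMS contributions produce the familiar $\frac{\eta}{2-\eta}\sigma_v^2$ after applying the small-step-size simplification of the fourth-order moment $E[{\bf x}{\bf x}^T{\bf K}{\bf x}{\bf x}^T]$ that underlies (5) of \cite{13}. The cross terms between $\mu\,e(n){\bf x}(n)$ and $\rho\,\text{sgn}\{{\bf w}(n)\}$ are linear in $\rho$ and, at steady state, can be written in terms of $E[\tilde{\bf w}^T(\infty)\,\text{sgn}\{{\bf w}(\infty)\}]$. Using Theorem~1 to substitute for $E[{\bf w}(\infty)]$ together with the small-$\rho$ hypothesis (4) on the non-zero index set $NZ$, this inner product collapses to a constant multiple of $\alpha_{2} = E[\|{\bf w}(\infty)\|_{1}] - \|{\bf w}_{0}\|_{1}$. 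The remaining sign-sign contribution is of order $\rho^{2}$; after the matrix-inverse manipulations involving $({\bf I} - \mu{\bf R})^{-1}$ it collapses to the scalar $\alpha_{1}$ defined in the theorem.

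Finally, setting $P_{ex}(n+1) = P_{ex}(n) = P_{ex}(\infty)$ in the resulting scalar recursion and solving algebraically yields the closed-form expression
\begin{equation}
P_{ex}(\infty) = \frac{\eta}{2-\eta}\,\sigma_v^{2} + \frac{\alpha_{1}\,\rho}{2-\eta}\left(\rho - \frac{2\alpha_{2}}{\alpha_{1}}\right),
\end{equation}
matching \eqref{eq:ss_emse}.

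The main obstacle I anticipate is the honest evaluation of the cross-covariance $E[\tilde{\bf w}(\infty)\,\text{sgn}^T\{{\bf w}(\infty)\}]$ and the sign-sign matrix $E[\text{sgn}\{{\bf w}(\infty)\}\,\text{sgn}^T\{{\bf w}(\infty)\}]$: these are statistics of a non-Gaussian, signum-valued vector that depends nonlinearly on the entire filter history, so a clean closed form requires both the small-$\rho$ decoupling assumption \eqref{eq:zalms} $\to$ (4) and a careful split of indices into zero and non-zero taps so that $\text{sgn}\{w_i(\infty)\}$ can be replaced by the deterministic $\text{sgn}\{w_{0,i}\}$ for $i \in NZ$. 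Once that reduction is in place, the remaining work is essentially algebraic bookkeeping.
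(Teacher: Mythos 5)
Your proposal is correct and follows essentially the same route the paper itself uses: the paper does not reprove this ZA-LMS result (it is quoted from \cite{13}), but its own proof of the analogous GC-LMS theorem proceeds exactly as you describe --- form the weight-error covariance recursion, reduce to a scalar recursion for the excess MSE via the diagonal/trace under the white-input and independence assumptions, identify the $\rho^2$ sign-sign term with $\alpha_1$ and the linear-in-$\rho$ cross term with $\alpha_2$ using the small-$\rho$ sign-consistency hypothesis on $NZ$, and solve at steady state. The only cosmetic slip is your reference to the update equation where you mean the hypothesis $E[sgn[w_i(\infty)]]=sgn[w_{0,i}]$; the mathematical content is the same.
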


\section{Review of the Gradient Comparator LMS or GC-LMS \cite{14}-\cite{14b}}
\subsection{Comparison of the LMS and the ZA-LMS, and the Motivation behind the GC-LMS}

In \cite{13}, the analysis of convergence behaviour of the ZA-LMS algorithm in mean-square sense has been given. 
The equation (24) in \cite{13} shows how the second term in \eqref{eq:ss_emse} increasing with decreasing system sparsity, makes the ZA-LMS algorithm behave worse 
than even the standard LMS algorithm in less-sparse and non-sparse scenario. Intuitively, one can see the reason behind this. 
The zero-attractor part attracts all the coefficients to zero irrespective of their optimum values. For the actual 'zero' taps, this causes
an accelaration while for the 'non-zero' taps it slowers the convergence. So, when number of 'non-zero' taps increases, the algorithm starts to perform 
worse. In the gradient comparator LMS \cite{14} and its variants \cite{14b}, the zero-attractors are selectively chosen for only the taps having polarity
same as that of the gradient of mean squared instantaneous error. Fig. $1$ and $2$ [originally published in \cite{14} and repeated here for the ease of understanding] show how the GC-LMS manifests better performance than the ZA-LMS
in less sparse systems. Now, for the ease of understanding, we briefly repeat the description of the figures.
In fig. $1$, first we set the values as a highly sparse system with only non-zero element. After $1000$ iterations, the values were reset as
to make a semi-sparse system model. After $2000$ iterations, the values of the elements of the system model vector were again reset as 
to get a highly non-sparse system model. The simulation stoped at $3000^{th}$ iteration. The red curve in the fig. $1$ represented the ZA-LMS, the blue one 
was for the standard LMS, and the green one was for the GC-LMS.
The simulation result in the fig. $1$ showed that in the sparse and semi-sparse system models, the green curve representing the GC-LMS algorithm always became equal to the better among the red (ZA-LMS) and blue (regular LMS) curves. 
In case of highly non-sparse system models, however the performance of GC-LMS was better than ZA-LMS, but worse than LMS.
Fig. $2$ represents the variation of steady-state EMSEs of the LMS (blue), the ZA-LMS (red) and  the GC-LMS (green) algorithms as the functions of system sparsity (which varies from $0$ to $1$)

One can check later that this result is perfectly supported by the mathematical formulation
we will have in the \emph{subsection}$IV.C$.

\subsection{The Compact Form of the GC-LMS Algorithm \cite{14}}

The GC-LMS filter update equation \cite{14}-\cite{14b} is 
\begin{equation}
 { \bf w}(n+1)={ \bf w}(n)+\mu e(n){ \bf x}(n)-\rho {\bf D}(n)  sgn({ \bf w}(n))
 \label{eq:gclms}
\end{equation}
where $sgn[\cdotp]$ is a component-wise sign function defined as 
  $$sgn[w_i(n)]=\frac {w_i(n)}{|w_i(n)|}$$ if $w_i(n)]\neq0$; and $0$ otherwise.
($w_i(n)$ is the $i^{th}$ element of ${\bf w}(n)$ vector)

and, ${\bf D}(n)$ is a diagonal matrix which has $\frac{1}{2}\arrowvert sgn(e(n){ \bf x}(n))-sgn({ \bf w}(n))\arrowvert$ vector as its diagonal.

\begin{figure*}[t]
\begin{center}
\includegraphics[width=150mm]{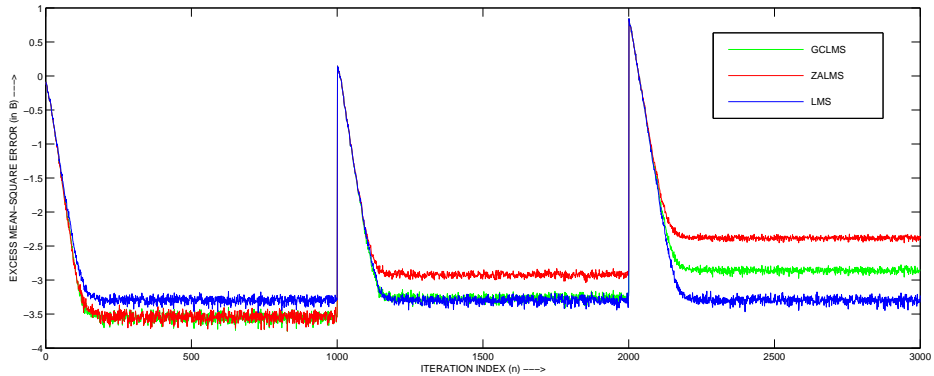}
\end{center}
\caption{The MSE versus no. of observations curve for the standard LMS(blue),
the sparse(ZA)-LMS(red) and the GC-LMS(green) \emph{\textbf{[Published in \cite{14}]}}}
\vspace*{-3pt}
\end{figure*}
%
\begin{figure*}[t]
\begin{center}
\includegraphics[width=150mm]{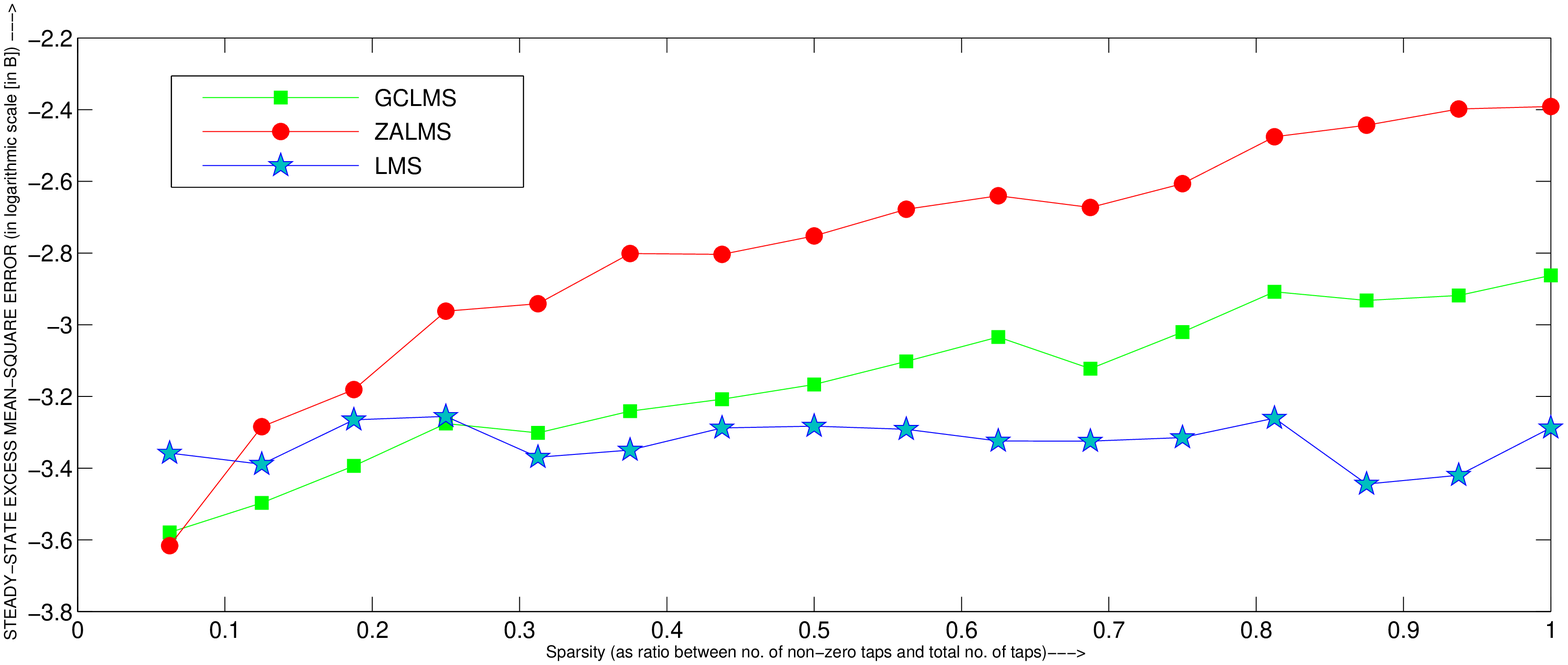}
\end{center}
\caption{The steady-state EMSE versus amount of system sparsity curve for the standard LMS(blue),
the sparse(ZA)-LMS(red) and the GC-LMS(green)  \emph{\textbf{[Published in \cite{14}]}}}
\vspace*{-3pt}
\end{figure*}  


\section{The Performance Analysis of the GC-LMS Algorithm}
In this section, we perform the mean and the mean square convergence analysis of the GC-LMS algorithm. The first part i.e. the \emph{convergence in mean}
has been stated in \cite{14} but no proof was presented in that paper. The second part i.e. the  \emph{mean square convergence} is a wholly new contribution.

\subsection{Convergence in Mean}

\begin{theorem}

The mean coefficient vector $E[{ \bf w}(n)]$ converges as $n\rightarrow \infty$ if $0<\mu<\frac{1}{\lambda_{max}}$ ($\lambda_{max}$ is the maximum eigenvalue of ${\bf R}$, the autocorrelation matrix of the input vector ${\bf x}(n)$), and the converged vector is
\begin{equation}
 E[{ \bf w}(\infty)]={\bf w}_{0}-\frac{\rho}{\mu}{ \bf R}^{-1}E[{\bf D}(\infty)sgn[{ \bf w}(\infty)]]
 \label{eq:gclms0}
\end{equation}

\end{theorem}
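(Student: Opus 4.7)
The plan is to follow the classical route used for the ZA-LMS in Theorem~1 and absorb the new feature introduced by the diagonal matrix ${\bf D}(n)$. First I would substitute the error decomposition $e(n)=({\bf w}_0-{\bf w}(n))^T{\bf x}(n)+e_0(n)$ into the update \eqref{eq:gclms} to obtain
\begin{equation*}
{\bf w}(n+1)={\bf w}(n)+\mu\,{\bf x}(n){\bf x}^T(n)\bigl({\bf w}_0-{\bf w}(n)\bigr)+\mu e_0(n){\bf x}(n)-\rho\,{\bf D}(n)\,sgn({\bf w}(n)).
\end{equation*}
Invoking the standard independence assumption (the regressor ${\bf x}(n)$ taken independent of the current weight vector, and $e_0(n)$ zero-mean and independent of ${\bf x}(n)$), taking expectations on both sides, and writing $\tilde{\bf w}(n):=E[{\bf w}(n)]-{\bf w}_0$, I would arrive at the linear recursion
\begin{equation*}
\tilde{\bf w}(n+1)=({\bf I}-\mu{\bf R})\,\tilde{\bf w}(n)-\rho\,E\bigl[{\bf D}(n)\,sgn({\bf w}(n))\bigr].
\end{equation*}

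Next I would treat this as an asymptotically stable LTI system driven by a bounded forcing term. The homogeneous part ${\bf I}-\mu{\bf R}$ contracts as soon as every eigenvalue $1-\mu\lambda_i$ lies strictly inside the unit circle, which the hypothesis $0<\mu<1/\lambda_{\max}$ certainly guarantees (in fact, it is stricter than the classical $2/\lambda_{\max}$ bound used for the ordinary LMS). The forcing term is uniformly bounded, because by construction each diagonal entry of ${\bf D}(n)$ lies in $\{0,1\}$ and each entry of $sgn({\bf w}(n))$ lies in $\{-1,0,+1\}$, so the vector $E[{\bf D}(n)\,sgn({\bf w}(n))]$ has entries of magnitude at most $1$. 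A strict contraction driven by a bounded forcing has a bounded orbit, so $\tilde{\bf w}(n)$ remains bounded and, once the adaptation enters its stationary regime, approaches a finite limit $\tilde{\bf w}(\infty)$.

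Finally, passing to the limit $n\to\infty$ in the recursion gives the fixed-point equation $\mu{\bf R}\,\tilde{\bf w}(\infty)=-\rho\,E[{\bf D}(\infty)\,sgn({\bf w}(\infty))]$; since ${\bf R}$ is positive definite and hence invertible, multiplying by $(\mu{\bf R})^{-1}$ and rearranging recovers exactly \eqref{eq:gclms0}. The step I expect to be the main obstacle is the middle one, namely justifying that $E[{\bf D}(n)\,sgn({\bf w}(n))]$ genuinely admits a limit as $n\to\infty$: because ${\bf D}(n)$ couples the sign of the instantaneous gradient $e(n){\bf x}(n)$ to the sign of ${\bf w}(n)$, its joint behaviour with $sgn({\bf w}(n))$ is much less transparent than the plain $E[sgn({\bf w}(n))]$ that appears in the ZA-LMS analysis of \cite{13}. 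Following the spirit of \cite{13}, I would simply postulate the existence of this stationary expectation, which keeps the argument at the same level of rigour as the sister result for the ZA-LMS and lets the derivation close.
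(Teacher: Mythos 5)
Your proposal follows essentially the same route as the paper's own proof: form $\widetilde{\bf w}(n)={\bf w}(n)-{\bf w}_0$, take expectations under the independence assumption to get the linear recursion $E[\widetilde{\bf w}(n)]=({\bf I}-\mu{\bf R})E[\widetilde{\bf w}(n-1)]-\rho E[{\bf D}(n)\,sgn({\bf w}(n-1))]$, observe that the forcing term is bounded and the homogeneous part is a contraction under the stated step-size condition, and read off the fixed point. Your explicit flag that the existence of the limit of $E[{\bf D}(n)\,sgn({\bf w}(n))]$ is being postulated is, if anything, slightly more candid than the paper, which glosses over the same point.
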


\begin{proof} Defining $\widetilde{{ \bf w}}(n)={ \bf w}(n)-{ \bf w}_{0}$, we get from \eqref{eq:gclms}
\begin{eqnarray}
 & & \widetilde{{ \bf w}}(n)=({ \bf I}-\mu { \bf x}(n){ \bf x}^T(n))\widetilde{{ \bf w}}(n-1)-\rho {\bf D}(n)sgn[{ \bf w}(n-1)]\nonumber\\
 & & +\mu e_{0}(n){ \bf x}(n)
 \label{eq:gclms2}
\end{eqnarray}
where $\bf{I}$ denoting the identity matrix, \textbf{R} denoting
the autocovariance matrix of the input,

Taking expectations on both sides of \eqref{eq:gclms2}, there is
\begin{equation}
 E[\widetilde{{ \bf w}}(n)]=({ \bf I}-\mu { \bf R})E[\widetilde{{ \bf w}}(n-1)]-\rho E[{\bf D}(n)sgn[{ \bf w}(n-1)]]
\end{equation}
The vector $\rho {\bf D}(n) sgn[{ \bf w}(n-1)]$ is bounded between $-\rho \bf 1$ and $\rho \bf 1$. Therefore $\rho$, being a very small number,
$E[\widetilde{w}(n)]$ converges if the maximal eigenvalue of $(\bf I-\mu \bf R)$ is less than $1$. Since $E[{ \bf w}(n)]=E[\widetilde{{ \bf w}}(n)]+{ \bf w}_{0}$,
$E[{ \bf w}(n)]$ also converges with the limiting vector shown in \eqref{eq:gclms0}.

It can be seen that the convergence conditions of the GC-LMS, the ZA-LMS and the standard LMS are the same, which are independent with $\rho$. \eqref{eq:gclms0} also implies that the GC-LMS filter returns a biased estimate of the true coefficient vector. 

\end{proof}

\subsection{Mean Square Convergence Analysis and the Steady-State EMSE}

\begin{theorem}
If $NZ$ denotes the index set of non-zero taps i.e., $w_{0,i}=0$ for $i\in NZ$, and assuming $\rho$ is sufficiently small so that for every $i\in NZ$
\begin{equation}
 sgn \{w_i(\infty)\}=sgn\{w_{0,i}\},
\end{equation}
 
the steady-stae EMSE of the GC-LMS filter is 
\begin{equation}
 P_{ex}(\infty)=\frac{ \eta}{2- \eta}\sigma_v^2+\frac{\beta_1\rho}{2-\eta_\mu}\left(\rho-\frac{2\beta_2}{\beta_1}\right),
 \label{eq:ss_gclms}
\end{equation}

where
\begin{equation}
\beta_1\; =\; E[({\bf D}(\infty)sgn\{{ \bf w}(\infty)\})^T({\bf I} - \mu {\bf R})^{-1}{\bf D}(\infty)sgn\{{ \bf w}(\infty)\}],
\end{equation}
 and,
\begin{equation}
\beta_2 = E[\parallel {\bf D}(\infty) { \bf w}(\infty)\parallel _1] - \parallel
{\bf D}(\infty) { \bf w}_{0}\parallel _1.
\end{equation}

\end{theorem}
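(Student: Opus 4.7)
The proof plan is to mirror the mean-square analysis of the ZA-LMS summarised in Theorem 2, carrying the extra diagonal factor ${\bf D}(n)$ through every step of that derivation. The natural starting point is the weight-error recursion already established in \eqref{eq:gclms2},
\begin{equation*}
\widetilde{{\bf w}}(n)=({\bf I}-\mu{\bf x}(n){\bf x}^T(n))\widetilde{{\bf w}}(n-1)-\rho{\bf D}(n)sgn[{\bf w}(n-1)]+\mu e_0(n){\bf x}(n).
\end{equation*}
From here I would form the outer product $\widetilde{{\bf w}}(n)\widetilde{{\bf w}}^T(n)$, take expectations, and invoke the standard independence assumption under which ${\bf x}(n)$ and $e_0(n)$ are uncorrelated with the earlier weight-error vector. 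Among the nine terms so produced, those carrying only $e_0(n)$ or an odd number of ${\bf x}(n)$ factors vanish, leaving a discrete Lyapunov-type recursion for ${\bf K}(n)=E[\widetilde{{\bf w}}(n)\widetilde{{\bf w}}^T(n)]$ driven by the LMS noise contribution $\mu^2\sigma_v^2{\bf R}$, a quadratic penalty term $\rho^2 E[{\bf D}(n)sgn[{\bf w}(n-1)]sgn^T[{\bf w}(n-1)]{\bf D}(n)]$, and a $\rho$-linear cross term coupling $\widetilde{{\bf w}}(n-1)$ with ${\bf D}(n)sgn[{\bf w}(n-1)]$.

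Next I would take the weighted trace $Tr({\bf R}\,\cdot\,)$ of both sides to collapse the matrix recursion into a scalar recursion for the excess MSE $P_{ex}(n)=E[\widetilde{{\bf w}}^T(n){\bf R}\widetilde{{\bf w}}(n)]$, pass to the limit $n\to\infty$, and impose stationarity. The pure-LMS contribution then condenses to $\eta\sigma_v^2/(2-\eta)$ exactly as in the proof of Theorem 2, with $\eta=Tr(\mu{\bf R}({\bf I}-\mu{\bf R})^{-1})$. The quadratic penalty piece, after applying the same trace identity of the form $Tr({\bf R}({\bf I}-\mu{\bf R})^{-1}{\bf M})$ used in \cite{13} and recognising the quadratic form displayed in the definition of $\beta_1$, contributes the $\rho^2\beta_1/(2-\eta)$ term.

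The delicate step, and the main obstacle, is the handling of the $\rho$-linear cross term, which must be reduced to $-2\rho\beta_2/(2-\eta)$. Here I would use the small-$\rho$ hypothesis $sgn[w_i(\infty)]=sgn[w_{0,i}]$ for $i\in NZ$, under which the tap-wise identity $|w_i(\infty)|-|w_{0,i}|=sgn[w_i(\infty)]\widetilde{w}_i(\infty)$ holds; multiplying by the $0/1$ diagonal entries $D_{ii}(\infty)$ and summing over $i$ gives $E[({\bf D}(\infty)sgn[{\bf w}(\infty)])^T\widetilde{{\bf w}}(\infty)]=E[\|{\bf D}(\infty){\bf w}(\infty)\|_1]-\|{\bf D}(\infty){\bf w}_0\|_1=\beta_2$, which is precisely the identification needed. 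Inserting this back into the steady-state scalar equation and collecting the $\rho$ and $\rho^2$ contributions yields \eqref{eq:ss_gclms}.
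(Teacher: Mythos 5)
Your proposal follows essentially the same route as the paper's proof: both start from the weight-error recursion \eqref{eq:gclms2}, form the covariance $E[\widetilde{{\bf w}}(n)\widetilde{{\bf w}}^T(n)]$ under the standard independence assumptions, pass to the steady state, and identify the $\rho^2$ term with $\beta_1$ via the $({\bf I}-\mu{\bf R})^{-1}$ weighting and the $\rho$-linear cross term with $\beta_2$ via the sign identity $sgn\{w_i(\infty)\}\widetilde{w}_i(\infty)=|w_i(\infty)|-|w_{0,i}|$. The only cosmetic difference is that the paper assumes white input and tracks the diagonal vector ${\boldsymbol \phi}(n)$ componentwise before summing $\lambda_k\phi_k(\infty)$, rather than taking the weighted trace $Tr({\bf R}\,\cdot\,)$ in one step, but the resulting fixed-point equation for $P_{ex}(\infty)$ is identical.
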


\begin{proof}

Defining ${ \bf \Phi }(n)=E[({ \bf w}(n)-{ \bf w}_0)({ \bf w}(n)-{ \bf w}_0)^T]$,
we can write the following expression
\begin{eqnarray}
 & & {\bf \Phi }(n) = { \bf \Phi}(n-1)-\mu { \bf R }{ \bf \Phi}(n-1)-\mu { \bf \Phi }(n-1){ \bf R }\nonumber\\
 & &+2 \mu^2 { \bf R }{ \bf \Phi}(n-1){ \bf R }+\mu^2 { \bf R } Tr({ \bf R }{ \bf \Phi }(n-1))\nonumber\\
& & +\mu^2 \sigma_{v}^{2} {\bf R} + \rho^2 E[{\bf D}(n)sgn({ \bf w}(n-1))sgn({ \bf w}^T(n-1)){\bf D}(n)^T]\nonumber \\
& & -\rho E[{\bf D}(n)sgn({ \bf w}(n-1))\widetilde{{ \bf w}}^T(n-1)]({ \bf I}-\mu { \bf R})\nonumber \\
& & -\rho ({ \bf I}-\mu { \bf R})E[{\bf D}(n)sgn({ \bf w}(n-1))\widetilde{{ \bf w}}^T(n-1)]
\end{eqnarray}

Assuming ${\bf x}(n)$ to be white, Gaussian, zero-mean, stationary random process with variance $\sigma_{x}^{2}$,
${\bf R}$ becomes ${\boldsymbol \Lambda}=\sigma_{x}^{2}{\bf I}$

 ${\boldsymbol \Lambda}$ is a diagonal matrix containing eigenvalues of ${ \bf R}$ as diagonal entries.

Then
\begin{eqnarray}
 & & { \bf \Phi}(n)={ \bf \Phi}(n-1)-\mu {\boldsymbol \Lambda}{ \bf \Phi}(n-1)-\mu { \bf \Phi}(n-1){\boldsymbol \Lambda}\nonumber\\
 & & +2 \mu^2 {\boldsymbol \Lambda}{ \bf \Phi}(n-1){\boldsymbol \Lambda}+\mu^2 {\boldsymbol \Lambda} Tr({\boldsymbol \Lambda}{ \bf \Phi}(n-1)) + \mu^2 \sigma_{v}^{2} {\boldsymbol \Lambda}\nonumber\\
& &+\rho^2 E[{\bf D}(n)sgn({ \bf w}(n-1))sgn({ \bf w}^T(n-1)){\bf D}(n)^T]\nonumber \\
& & -\rho E[{\bf D}(n)sgn({ \bf w}(n-1))\widetilde{{ \bf w}}^T(n-1)]({ \bf I}-\mu {\boldsymbol \Lambda})\nonumber \\
& &-\rho ({ \bf I}-\mu {\boldsymbol \Lambda})E[{\bf D}(n)sgn({ \bf w}(n-1))\widetilde{{ \bf w}}^T(n-1)]\nonumber\\
\end{eqnarray}

Now, if we define ${ \boldsymbol \phi}(n)=\{{ \bf \Phi}(n)\}_{i,i},  i=1,2,\dots N$

\begin{equation}
{ \boldsymbol \phi}(n)= { \bf B}{ \boldsymbol \phi}(n-1)+\mu^2  \sigma_{v}^{2} {\boldsymbol \lambda}+\rho^2 { \bf f}(n-1)-2\rho( { \bf I}-\mu {\boldsymbol \Lambda}) { \bf g}(n-1)
\end{equation}
where
$${ \bf B}={ \bf B1}+\mu^2 {\boldsymbol \lambda}{\boldsymbol \lambda}^T$$
${ \bf B1}$ is a diagonal matrix formed by the diagonal entries $$b1_k=1-2\mu \lambda_k+2\mu^2 \lambda_{k}^{2}$$
$${\boldsymbol \lambda}=diag\{{\boldsymbol \Lambda}\}$$
$${ \bf f}(n)=diag\{E[{\bf D}(n)sgn({ \bf w}(n-1))sgn({ \bf w}^T(n-1)){\bf D}(n)^T]\}$$
$${ \bf g}(n)=diag\{E[{\bf D}(n)sgn({ \bf w}(n-1))\widetilde{{ \bf w}}^T(n-1)]\}$$

Now,
$$P_{ex}(\infty)={\boldsymbol \lambda}^T{ \boldsymbol \phi}(\infty)$$

Then,
\begin{equation}
{ \boldsymbol \phi}(\infty)= { \bf B}{ \boldsymbol \phi}(\infty)+\mu^2  \sigma_{v}^{2} {\boldsymbol \lambda}+\rho^2 { \bf f}(\infty)-2\rho( { \bf I}-\mu {\boldsymbol \Lambda}) { \bf g}(\infty)
\end{equation}

The $k^{th}$ entry of ${ \boldsymbol \phi}(\infty)$,
\begin{eqnarray}
 & & \phi_k(\infty)=b1_k\phi_k(\infty)+\mu^2\lambda_kP_{ex}(\infty)+\mu^2\sigma_{v}^{2}\lambda_k\nonumber\\
 & & +\rho^2f_k(\infty)-2\rho(1-\mu\lambda_k)g_k(\infty)
\end{eqnarray}

\begin{eqnarray}
 & &\phi_k(\infty)=\frac{\mu^2\lambda_kP_{ex}(\infty)}{1-b1_k}+\frac{\mu^2\sigma_{v}^{2}\lambda_k}{1-b1_k}+\frac{\rho^2f_k(\infty)}{1-b1_k}\nonumber\\
 & & -\frac{2\rho(1-\mu\lambda_k)g_k(\infty)}{1-b1_k}
\end{eqnarray}

\begin{eqnarray}
& & P_{ex}(\infty)=\sum\limits_{k=1}^{N}\lambda_k\phi_k(\infty)\nonumber=\sum\limits_{k=1}^{N}\frac{\mu^2\lambda_{k}^{2}P_{ex}(\infty)}{1-b1_k}\nonumber\\
& & +\sum\limits_{k=1}^{N}\frac{\mu^2\sigma_{v}^{2}\lambda_{k}^{2}}{1-b1_k}+\sum\limits_{k=1}^{N}\frac{\rho^2\lambda_kf_k(\infty)}{1-b1_k}\nonumber\\
& & -\frac{2\rho\lambda_k(1-\mu\lambda_k)g_k(\infty)}{1-b1_k}
\end{eqnarray}

Now, using $1-b1_k=2\mu\lambda_k-2\mu^2\lambda_k^2$

\begin{eqnarray}
 & & P_{ex}(\infty)=\frac{1}{2}\sum\limits_{k=1}^{N}\frac{\mu\lambda_{k}P_{ex}(\infty)}{1-\mu\lambda}+\frac{1}{2}\sum\limits_{k=1}^{N}\frac{\mu \sigma_{v}^{2}\lambda_{k}}{1-\mu\lambda_k}\nonumber\\
& & +\frac{\rho^2}{2\mu}\sum\limits_{k=1}^{N}\frac{f_k(\infty)}{1-\mu\lambda_k}-\sum\limits_{k=1}^{N}\frac{\rho}{\mu}g_k(\infty)
\end{eqnarray}

\begin{eqnarray}
& &  P_{ex}(\infty)=\frac{tr(\mu{\boldsymbol \Lambda}({\bf I}-\mu{\boldsymbol \Lambda})^{-1})}{2-tr(\mu{\boldsymbol \Lambda}({\bf I}-\mu{\boldsymbol \Lambda})^{-1})}\sigma_{v}^{2}\nonumber\\
& & +\frac{\rho}{ \mu(2-tr(\mu{\boldsymbol \Lambda}({\bf I}-\mu{\boldsymbol \Lambda})^{-1})}(\rho\sum\limits_{k=1}^{N}\frac{f_k(\infty)}{1-\mu\lambda_k}\nonumber\\
& & -\sum\limits_{k=1}^{N}2g_k(\infty))
\end{eqnarray}

For the assumption that the input to be white, the last equation leads to \eqref{eq:ss_gclms}.

\end{proof}

 
%

To further specify $\beta_2$, we have the following corollary.
\begin{corollary}  
If $Z$ and $NZ$ be the index sets of zero taps and non-zero
taps respectively, and ${\bf w}(n)$ is assumed to be Gaussian distributed, an approximation of $\beta_2$ is given by
\begin{equation}
 \beta_2\simeq \sum\limits_{i\in Z}\sqrt{\frac{2}{\pi}{\phi}_{i}(\infty)}-\frac{\rho}{\mu}\sum\limits_{i\in NZ}|s_i(\infty)|,
 \label{eq:beta2}
\end{equation}
where $\phi_{i}(\infty)$ and $s_i$ are $i^{th}$ element of the diagonal of $\boldsymbol{\Phi}(\infty)$ and ${\bf s}$ respectively.

${\bf s}$ is defined as
\begin{eqnarray}
 {\bf s}={\boldsymbol \Lambda}^{-1}E[{\bf D}(\infty)sgn\{{\bf w}(\infty)\}]
\end{eqnarray}
\end{corollary}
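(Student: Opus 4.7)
\medskip

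The plan is to split the sum defining $\beta_{2}$ into contributions from the zero-tap indices $i\in Z$ and the non-zero-tap indices $i\in NZ$, and then evaluate each contribution separately under the stated assumptions. Writing the $i$-th diagonal entry of ${\bf D}(\infty)$ as $d_{i}\in\{0,1\}$, observe that $\|{\bf D}(\infty){\bf w}(\infty)\|_{1}=\sum_{i}d_{i}|w_{i}(\infty)|$ and similarly for ${\bf w}_{0}$, so that after taking expectation the $i$-th summand is $E[d_{i}|w_{i}(\infty)|]-|w_{0,i}|E[d_{i}]$. I will also rely on the mean-convergence formula of Theorem~3 which, specialized to white input (${\bf R}={\boldsymbol\Lambda}$), reads coordinate-wise as $E[w_{i}(\infty)]=w_{0,i}-(\rho/\mu)s_{i}(\infty)$.

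For the zero-tap term ($i\in Z$), I have $w_{0,i}=0$, so only $E[d_{i}|w_{i}(\infty)|]$ survives. Under the Gaussian hypothesis on ${\bf w}(n)$, $w_{i}(\infty)$ is Gaussian with variance $\phi_{i}(\infty)$ and with a mean $-(\rho/\mu)s_{i}(\infty)$ that is $O(\rho)$ and therefore negligible at leading order. Hence $E[|w_{i}(\infty)|]\approx\sqrt{2\phi_{i}(\infty)/\pi}$ by the standard folded-Gaussian identity. Treating the $d_{i}$ factor as effectively unity on the zero taps (this is consistent with the motivation of the GC-LMS in Section~III, where the zero-attractor is active precisely on coordinates where the sign condition selects it, which almost always includes the near-zero coefficients), this gives the first sum $\sum_{i\in Z}\sqrt{2\phi_{i}(\infty)/\pi}$.

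For the non-zero-tap term ($i\in NZ$), the sign assumption $\mathrm{sgn}\{w_{i}(\infty)\}=\mathrm{sgn}\{w_{0,i}\}$ lets me drop the absolute value and write $|w_{i}(\infty)|=\mathrm{sgn}(w_{0,i})\,w_{i}(\infty)$. Substituting the mean expression yields
\begin{equation*}
E[|w_{i}(\infty)|]-|w_{0,i}|=-\frac{\rho}{\mu}\,\mathrm{sgn}(w_{0,i})\,s_{i}(\infty).
\end{equation*}
Summing over $i\in NZ$ and using the heuristic that on the non-zero taps $s_{i}(\infty)$ carries the same sign as $w_{0,i}$ (because ${\bf D}(\infty)\,\mathrm{sgn}\{{\bf w}(\infty)\}$ is, by construction of the gradient-comparator update, aligned with the true sign pattern), one may replace $\mathrm{sgn}(w_{0,i})\,s_{i}(\infty)$ by $|s_{i}(\infty)|$, giving the second sum $-(\rho/\mu)\sum_{i\in NZ}|s_{i}(\infty)|$. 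Combining the two contributions yields \eqref{eq:beta2}.

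The main obstacle is justifying the two approximations cleanly: first, collapsing the random diagonal factor $d_{i}$ inside the expectations (which strictly requires a separation-of-expectations or independence argument that the paper does not spell out), and second, the sign alignment between $s_{i}(\infty)$ and $w_{0,i}$ on $NZ$, which is only asymptotically true for small $\rho$. I would therefore present both as explicit small-$\rho$/GC-LMS-consistency approximations, consistent with the ``$\simeq$'' in \eqref{eq:beta2}, rather than as equalities, and note that the ignored mean bias on zero taps and the sign mismatch on non-zero taps are both $O(\rho)$ corrections to the stated leading-order expression.
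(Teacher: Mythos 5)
The paper offers no derivation of its own here --- its ``proof'' is a one-line deferral to Lemma~1 of \cite{13} --- so your reconstruction supplies the only detailed argument, and its skeleton is the right one: split $\beta_2$ over $Z$ and $NZ$, apply the folded-Gaussian identity $E[|w_i(\infty)|]\simeq\sqrt{2\phi_i(\infty)/\pi}$ on the zero taps (the $O(\rho)$ mean being negligible there), and use the mean-convergence bias $E[w_i(\infty)]=w_{0,i}-(\rho/\mu)s_i(\infty)$ together with the sign assumption on the non-zero taps. One remark: the sign alignment you label a ``heuristic'' is actually forced, since $s_i(\infty)=\sigma_x^{-2}E[d_i\,\mathrm{sgn}\{w_i(\infty)\}]=\sigma_x^{-2}\,\mathrm{sgn}\{w_{0,i}\}\,E[d_i]$ with $d_i\ge 0$, so $\mathrm{sgn}\{w_{0,i}\}\,s_i(\infty)=|s_i(\infty)|$ follows from the sign assumption alone.

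The genuinely weak point is your handling of the diagonal entries $d_i$ of ${\bf D}(\infty)$, which is precisely what distinguishes this corollary from the ZA-LMS lemma it imitates. On $Z$ you set $d_i\approx 1$ on the grounds that the comparator ``almost always'' activates the attractor on near-zero coefficients; at steady state this is not supported: for a zero tap, $w_i(\infty)$ fluctuates about $0$ while $\mathrm{sgn}\{e(\infty)x_i(\infty)\}$ is nearly symmetric and nearly independent of $\mathrm{sgn}\{w_i(\infty)\}$ in the small-misadjustment regime, so $E[d_i]$ is closer to $1/2$ than to $1$ --- indeed the paper's own proof of Corollary~2 relies on $\tfrac{1}{2}E[|\mathrm{sgn}\{e(\infty)x_i(\infty)\}-\mathrm{sgn}\{w_{0,i}\}|]$ being strictly less than $1$. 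Likewise, on $NZ$ a consistent bookkeeping of $d_i$ (under the same separation-of-expectations you invoke) leaves an extra factor $E[d_i]$ multiplying $|s_i(\infty)|$, which you silently drop. Both simplifications are exactly what is needed to land on the stated formula, whose first term is designed to coincide with eq.~(24) of \cite{13}, so your endpoint agrees with the corollary; but the replacement of ${\bf D}(\infty)$ by the identity inside these expectations should be stated as an explicit additional approximation inherited by the statement itself, not justified by an activation argument that the steady-state statistics contradict.
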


\begin{proof}
The proof is exactly similar to the proof of \emph{the Lemma $1$} in \cite{13}. 
\end{proof}

\subsection{Theoretical Comparison with the ZA-LMS}

Now, $\beta_1$ is always positive.

The first term in the R.H.S. of \eqref{eq:ss_gclms} is the excess M.S.E. of the standard LMS based filter.
Therefore when $\beta_2>0$ we can expect lower M.S.E. than standard LMS if 
$\rho$ is selected between $0$ and $\frac{2\beta_2}{\beta_1}$,
$$P_{ex}(\infty)<\frac{\eta}{2-\eta}\sigma_{v}^{2}$$

%


There are two competitive terms in the R.H.S. of \eqref{eq:beta2}. The first one varies about zero for the taps associated with inactive coefficients of ${\bf w}_{0}$. The second term is a bias which is due to the shrinkage of the taps associated with active coefficients of ${\bf w}_{0}$.
When the zero taps take the majority, the first term dominates the second one and positive $\beta_2$ can be therefore obtained. It can be seen that first term in the R.H.S. of \eqref{eq:beta2} is equal to 
that of the equation (24) in \cite{13} related to the ZA-LMS. 
So, for highly sparse system models, when most of the coefficient indices belong to $Z$,  performance of the GC-LMS resembles that of the ZA-LMS.

However, behaviour of the second term in the aforementioned equation is quite different from that of the ZA-LMS in \cite{13} [see eq. (24) in that paper]. 

\begin{corollary}
 When the non-zero taps take majority, $\sum\limits_{i\in NZ}|s_i(\infty)| < \sum\limits_{i\in NZ}|b_i(\infty)|$, where $b_i(n)$ and $s_i(n)$ are $i^{th}$
 elements of ${\bf b}(n)$ and ${\bf s}(n)$ respectively.
\end{corollary}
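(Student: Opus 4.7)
The plan is to reduce the inequality to a componentwise bound by exploiting the explicit structure of the diagonal matrix $\mathbf{D}(n)$. Recall that by definition each entry $D_{ii}(n)=\tfrac{1}{2}|\mathrm{sgn}(e(n)x_i(n))-\mathrm{sgn}(w_i(n))|$ takes value in $\{0,1\}$: it equals $1$ when the instantaneous gradient direction disagrees with the current tap sign, and $0$ when they agree. So $\mathbf{D}(\infty)$ is an indicator-type mask, and the GC-LMS vector $E[\mathbf{D}(\infty)\,\mathrm{sgn}\{\mathbf{w}(\infty)\}]$ is a thinned version of the ZA-LMS vector $E[\mathrm{sgn}\{\mathbf{w}(\infty)\}]$ that appears inside the ZA-LMS analogue $\mathbf{b}=\boldsymbol{\Lambda}^{-1}E[\mathrm{sgn}\{\mathbf{w}(\infty)\}]$ from \cite{13}.

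Next, I would expand componentwise. Because $\boldsymbol{\Lambda}=\sigma_x^2\mathbf{I}$ is a positive diagonal matrix, $|s_i(\infty)|=\lambda_i^{-1}|E[D_{ii}(\infty)\,\mathrm{sgn}(w_i(\infty))]|$ and $|b_i(\infty)|=\lambda_i^{-1}|E[\mathrm{sgn}(w_i(\infty))]|$. Using the standing assumption of Theorem~4 that $\mathrm{sgn}(w_i(\infty))=\mathrm{sgn}(w_{0,i})$ for $i\in NZ$, each $\mathrm{sgn}(w_i(\infty))$ becomes a deterministic $\pm 1$. Hence $|s_i(\infty)|=\lambda_i^{-1}E[D_{ii}(\infty)]=\lambda_i^{-1}\Pr(D_{ii}(\infty)=1)$ while $|b_i(\infty)|=\lambda_i^{-1}$. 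Summing over $i\in NZ$ reduces the claim to showing that
\begin{equation}
\sum_{i\in NZ}\Pr(D_{ii}(\infty)=1)<\sum_{i\in NZ}1=|NZ|,
\end{equation}
which is equivalent to asserting that for some $i\in NZ$ the event $D_{ii}(\infty)=0$, i.e.\ agreement between $\mathrm{sgn}(e(n)x_i(n))$ and $\mathrm{sgn}(w_i(n))$, has strictly positive steady-state probability.

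The hard part is precisely this last probabilistic step: I must argue that when the non-zero taps take the majority, sign agreement between instantaneous gradient and tap is not a zero-probability event at steady state. I would handle it as follows. Under the Gaussian assumption on $\mathbf{w}(n)$ already used in Corollary~1, write $e(n)x_i(n)=(w_{0,i}-w_i(n))x_i(n)^2+\sum_{j\ne i}(w_{0,j}-w_j(n))x_j(n)x_i(n)+e_0(n)x_i(n)$. In steady state each fluctuation term is a continuous random variable with nondegenerate density and zero mean, while $\mathrm{sgn}(w_i(n))$ is concentrated at $\mathrm{sgn}(w_{0,i})$. So $\Pr(\mathrm{sgn}(e(n)x_i(n))=\mathrm{sgn}(w_{0,i}))$ is strictly between $0$ and $1$, giving $\Pr(D_{ii}(\infty)=1)<1$. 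Combined with the fact that the number of active taps $|NZ|$ is the majority, the shortfall in at least one coordinate produces a strict inequality for the sum.

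Finally I would close by noting the contrast with the ZA-LMS: because $\mathbf{D}(n)$ can only zero out contributions, never enlarge them, one always has $|s_i(\infty)|\le|b_i(\infty)|$ coordinatewise, and the strict inequality in the majority-non-zero regime is exactly what translates into the improved second term of \eqref{eq:ss_gclms} relative to \eqref{eq:ss_emse}, consistent with the empirical behaviour reported in Fig.~2. The main technical obstacle, as noted, is the strict-positivity argument; everything else is essentially bookkeeping on the definitions of $\mathbf{s}$, $\mathbf{b}$, and $\mathbf{D}$.
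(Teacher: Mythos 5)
Your proposal follows essentially the same route as the paper: both reduce the claim to the componentwise observation that $|s_i(\infty)|=\sigma_x^{-2}E[D_{ii}(\infty)]$ with $D_{ii}(\infty)=\tfrac{1}{2}|\mathrm{sgn}(e(\infty)x_i(\infty))-\mathrm{sgn}(w_{0,i})|\in\{0,1\}$, and then argue that this expectation is strictly less than $1=\sigma_x^{2}|b_i(\infty)|\cdot\sigma_x^{-2}\cdot\sigma_x^{2}$, i.e.\ that sign disagreement between the instantaneous gradient and the tap is not an almost-sure event at steady state. If anything, your justification of that key probabilistic step (via the nondegenerate continuous distribution of the error term) is more explicit than the paper's, which simply asserts that the gradient oscillates in sign at steady state so that $\tfrac{1}{2}E[|\mathrm{sgn}\{e(\infty)x_i(\infty)\}-\mathrm{sgn}\{w_{0,i}\}|]<1$.
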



\begin{proof}

\begin{eqnarray}
& &\sum\limits_{i\in NZ}|s_i(\infty)|\nonumber\\
& & =\frac{\sum\limits_{i\in NZ}E[|sgn\{e(\infty)x_i(\infty)\}-sgn\{ w_{0,i}\}|] |sgn\{ w_{0,i}\}|}{2\sigma_{x}^2}\nonumber\\
\label{eq:s1}
\end{eqnarray}

Now, as we know that the gradient of error surface, i.e., $e^2(n)$ changes its sign periodically at the steady-state, and $sgn\{ w_{0,i}\}$ is a constant, and we can 
also guess that the probability of the gradient to be positive or negative is significantly less than $1$, i.e., $\frac{1}{2}E[|sgn\{e(\infty) x_i(\infty)\}-sgn\{ w_{0,i}\}|]<1$

So, for white, stationary input, \eqref{eq:s1} becomes
\begin{eqnarray}
& & \sum\limits_{i\in NZ}|s_i(\infty)|\nonumber\\
& & < \frac{1}{ \sigma_{x}^{2}}\sum\limits_{i\in NZ}E[ |sgn\{ w_{0,i}\}|]=\sum\limits_{i\in NZ}|b_i(\infty)|\nonumber\\
\label{eq:s2}
\end{eqnarray}

From the above equation \eqref{eq:s2}, \emph{the Corollary $2$} is justified.
\end{proof}

For non-sparse and semi-sparse systems, $\alpha_2$ (for the ZA-LMS) and $\beta_2$ (for the GC-LMS) become negative.
So, both of them perform poorly compared with the LMS. But, using \emph{the Corollary $2$}, we can clearly 
state that the GC-LMS is better than the ZA-LMS for non-sparse and semi-sparse system models.

\section{Conclusions}
Exploiting sparsity in the system model has been widely accepted for
various system identification problems. But only small number of them are
capable of handling the time-varying sparseness, but at the cost of increased computational burden.
The gradient comparator LMS or GC-LMS is a computationally cheap solution to this trade-off. In this paper,
we provide the mean and the mean square convergence analyses of the GC-LMS, and the analytical findings support
the simulation results.

\end{document}